\def\BibTeX{{\rm B\kern-.05em{\sc i\kern-.025em b}\kern-.08em
    T\kern-.1667em\lower.7ex\hbox{E}\kern-.125emX}}
    \renewcommand{\vec}[1]{\mathbf{#1}}
\newtheorem{problem}{Problem}
\newtheorem{remark}{Remark}
\newtheorem{proposition}{Proposition}
\newcommand{\mat}[1]{\mathbf{#1}}
\DeclareMathAlphabet{\mathcalvar}{OMS}{cmsy}{m}{n}
\newcommand{\kibitz}[2]{\ifnum\Comments=0\textcolor{#1}{#2}\fi}
\colorlet{Changes@Color}{red}
\pgfplotsset{compat=1.16}
\pgfplotsset{yticklabel style={text width=2em,align=right}}
\pgfplotsset{every tick label/.append style={font=\small}}
\begin{document}

\title{A Hierarchical MPC Approach to Car-Following\\ via Linearly Constrained Quadratic Programming\\
\thanks{This material is based upon work supported by the National Science Foundation under Grant Numbers CNS-1837244.
Any opinions, findings, and conclusions or recommendations expressed in this material are those of the author(s) and do not necessarily reflect the views of the National Science Foundation.
This material is based upon work supported by the U.S. Department of Energy’s Office of Energy Efficiency and Renewable Energy (EERE) award number CID DE-EE0008872.
The views expressed herein do not necessarily represent the views of the U.S. Department of Energy or the United States Government.}
}

\author{\IEEEauthorblockN{Fangyu Wu}
\IEEEauthorblockA{\textit{Dept. of EECS} \\
\textit{University of California, Berkeley}\\
Berkeley, USA \\
fangyuwu@berkeley.edu}
\and
\IEEEauthorblockN{Alexandre M. Bayen}
\IEEEauthorblockA{\textit{Dept. of EECS} \\
\textit{University of California, Berkeley}\\
Berkeley, USA \\
bayen@berkeley.edu}
}

\maketitle

\begin{abstract}
Single-lane car-following is a fundamental task in autonomous driving.
A desirable car-following controller should keep a reasonable range of distances to the preceding vehicle and do so as smoothly as possible.
To achieve this, numerous control methods have been proposed:
some only rely on local sensing; others also make use of non-local downstream observations.
While local methods are capable of attenuating high-frequency velocity oscillation and are economical to compute,
non-local methods can dampen a wider spectrum of oscillatory traffic but incur a larger cost in computing.
In this article, we design a novel non-local tri-layer MPC controller that is capable of smoothing a wide range of oscillatory traffic and is amenable to real-time applications.
At the core of the controller design are \textsl{1}) an accessible prediction method based on ETA estimation and \textsl{2}) a robust, light-weight optimization procedure, designed specifically for handling various headway constraints.
Numerical simulations suggest that the proposed controller can simultaneously maintain a variable headway while driving with modest acceleration and is robust to imperfect traffic predictions.
\end{abstract}

\begin{IEEEkeywords}
Autonomous vehicles, predictive control for linear systems, hierarchical control.
\end{IEEEkeywords}

\section{Introduction}
\label{sec:introduction}
Despite significant progress in recent years, autonomous driving still remains a challenging problem.
Among many problems in this area, single-lane car-following is arguably a fundamental quest.

It is widely known that human car-following is inherently sub-optimal.
Suigiyama and Tadaki have demonstrated in~\cite{sugiyama2008traffic,tadaki2013phase} that stop-and-go waves could emerge in a circular platoon of human vehicles without the presence of physical bottlenecks or lane changes.

To improve, various \textit{local} controllers have been proposed~\cite{martinez2007safe,kesting2008adaptive,luo2010model,kerner2018physics,stern2018dissipation}.
For example, Luo has described a model predictive control (MPC) formulation for fuel-efficient, fixed-headway adaptive cruise control~\cite{luo2010model}.
Kerner et. al. have presented a variable-headway controller that can attenuate propagation of stop-and-go waves~\cite{kerner2018physics}.
In addition, Stern et. al. have demonstrated that a single vehicle controlled by a proportional-integral controller is able to dampen stop-and-go waves induced by human drivers in a circular platoon~\cite{stern2018dissipation}.

Beyond local methods, various approaches on \textit{non-local} controllers have been proposed in an effort to attenuate a wider spectrum of oscillations, such as~\cite{bu2010design,schmied2015nonlinear,jia2018energy}.
The key advantage of non-local controllers lies at the fact that with non-local knowledge of downstream traffic it can react in advance to dampen non-local oscillations.
For instance, \cite{schmied2015nonlinear,jia2018energy} introduced various variable-headway MPC methods to achieve smoother car-following by using information from downstream traffic lights.

\added[id=FW]{
A common feature of the non-local methods above is a bi-layer structure consisting of a prediction layer and a control layer.
To predict, these methods require dedicated sensing infrastructure and either vehicle-to-infrastructure~\cite{schmied2015nonlinear,jia2018energy} or vehicle-to-vehicle~\cite{bu2010design} communication.
While non-local methods may have higher performance upper bound, they are also more prone to errors~\cite{hyeon2019influence}.
Yet, mitigating the impacts of erroneous predictions has been largely overlooked in previous works.

To address prediction errors, a common idea is to subdivide the control layer into a planning layer and a tracking layer.
First, the planning layer produces a reference trajectory using the forecast from the prediction layer.
Then, the tracking layer attempts to track this reference without violating certain performance constraints.
This approach has found many applications in transportation outside of car-following.
For example, \cite{malikopoulos2018decentralized,shivam2020intersection} have applied this idea for autonomous intersection coordination.

Adopting this idea of planning-then-tracking, we propose a novel hierarchical MPC controller, consisting of a prediction layer, a planning layer, and a tracking layer.
The core MPC formulation assumes the form of a linearly constrained quadratic program (LCQP).
Due to the tri-layer design, the controller is robust to prediction errors.
In contrast to~\cite{schmied2015nonlinear,hyeon2019short}, the method does not require dedicated sensing and communication infrastructure for prediction.
Rather, it only uses an estimated time of arrival (ETA) estimator, which is commonly accessible from mainstream map service providers and needs only cellular network.
In addition, compared to~\cite{schmied2015nonlinear,jia2018energy}, our method admits a more flexible formulation of headway constraints and is better understood when it comes to theoretical properties such as optimality and feasibility.

We regard the primary contribution of the paper as a novel tri-layer MPC design, along with a comprehensive numerical study that demonstrates its robustness to prediction errors.
At the code of the MPC design are:
\textsl{1}) A simple, accessible prediction method using an ETA estimator and
\textsl{2}) A robust, light-weight LCQP formulation.

We state the problem of the two-vehicle car-following in Section~\ref{sec:problem_statement}.
Next, we formally define and analyze the tri-layer hierarchical MPC control scheme in Section~\ref{sec:optimal_car-following_controller}.
We evaluate the performance of our MPC controller via numerical simulations in Section~\ref{sec:numerical_simulation} and discuss the numerical results in Section~\ref{sec:discussions}.
Lastly, we conclude our findings in Section~\ref{sec:conclusions}.
}

\section{Problem Statement}
\label{sec:problem_statement}

Consider a two-vehicle setup, where a considered vehicle follows a preceding vehicle that travels along a path of length $L$.
Denote the initial and final times of the trip as $t_{0}$ and $t_{f}$, respectively.

Let the state of a vehicle be $\vec{x} \coloneqq [s \quad v]^{\top}$, where $s$ and $v$ represent the position and speed of a vehicle, respectively.
Define $a$ as the acceleration of a vehicle.
Let $\ell$ be the length of a vehicle.
We use superscript to identify to which vehicle a variable belongs, where the considered vehicle is indexed as one and the preceding vehicle zero.
Thus, $\vec{x}^{1}$ refers to the state of the considered vehicle.
Let the initial position of the considered vehicle be $w_{0}$, that is, $s^{1}(t_{0}) \coloneqq w_{0}$.
Let the initial position of the preceding vehicle be $w_{1}$, that is, $s^{0}(t_{0}) \coloneqq w_{1}$.
Let $w_{f} \coloneqq w_{1} + L$.
Then, $s^{0}(t_{f}) \coloneqq w_{f}$.
The two-vehicle scenario is illustrated in Figure~\ref{fig:problem_setup}.

\begin{figure}[htbp]
  \centering
  \resizebox{0.49\textwidth}{!}{%
  \begin{tikzpicture}
    \node[xscale=-1] (considered_vehicle) at (0.75, 0.5) {\includegraphics[width=0.075\textwidth]{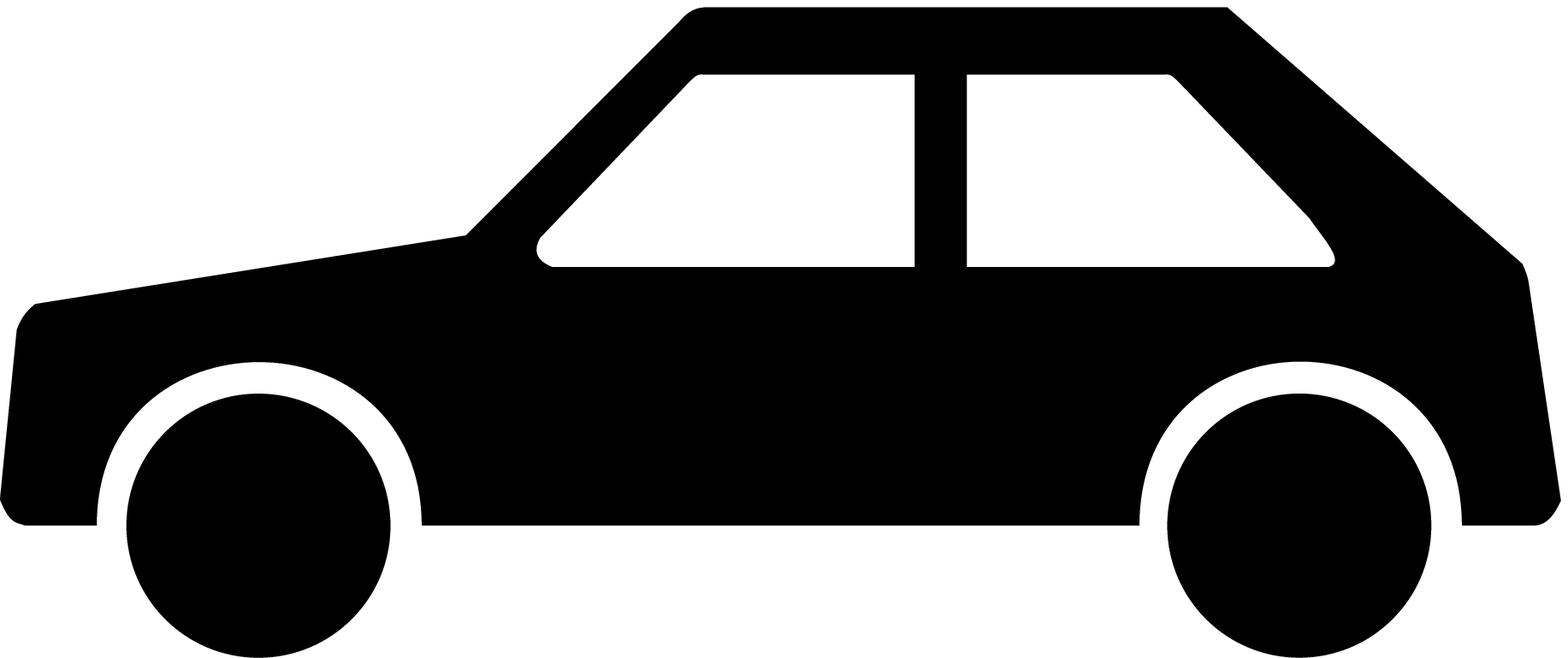}};
    \node[xscale=-1] (preceding_vehicle) at (3.5, 0.5) {\includegraphics[width=0.075\textwidth]{figures/car_pictogram.eps}};
    \draw[->, >=latex, line width=0.35mm] (0, 0) to (8, 0) node[right] {$s$};
    \draw (1.5, 0.1) to (1.5, 0) node[below] {$w_{0}$};
    \draw (4.25, 0.1) to (4.25, 0) node[below] {$w_{1}$};
    \draw (5, 0.1) to (5, 0) node[below] {$w_{2}$};
    \draw (5.75, 0) node[below] {$\dots$};
    \draw (6.5, 0.1) to (6.5, 0) node[below] {$w_{f-1}$};
    \draw (7.25, 0.1) to (7.25, 0) node[below] {$w_{f}$};
    \draw (0.75, 0.9) node[above] {$\vec{x}^{1}, a^{1}, \ell^{1}$};
    \draw (0.75, 1.5) node[above] {Consid. Veh.};
    \draw (3.5, 0.9) node[above] {$\vec{x}^{0}, a^{0}, \ell^{0}$};
    \draw (3.5, 1.5) node[above] {Preced. Veh.};
    \draw[decoration={brace}, decorate] (4.25, 0.2) -- node[above=3pt] {$\Delta{s}$} (5, 0.2);
    \draw[decoration={brace,mirror,raise=5pt}, decorate] (4.25, -0.25) -- node[below=6pt] {$L$} (7.25, -0.25);
    \draw[->, >=latex, line width=0.2mm, dotted] (1.5, 0.5) to node[above]{\footnotesize RADAR} (2.75, 0.5);
    \draw[->, >=latex, line width=0.2mm, dotted] (-0.75, 1.25) |- (0, 0.5);
    \node (network) at (-0.75, 1.5) {\includegraphics[width=0.05\textwidth]{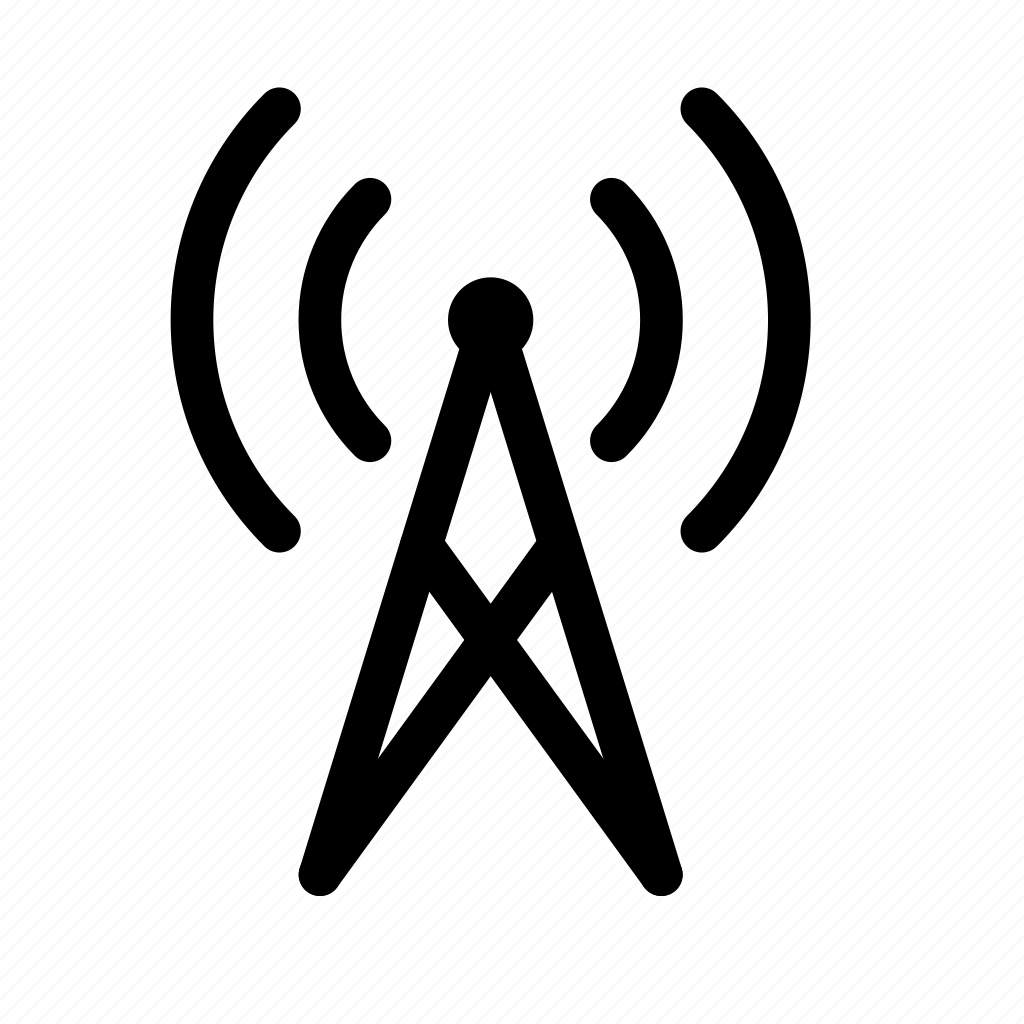}};
    \draw (-0.375, 0.5) node[above] {\footnotesize ETA};
  \end{tikzpicture}}%
  \caption{Scenario of the two-vehicle car-following problem.
  At $t_{0}$, the considered vehicle is at $w_{0}$ and the preceding vehicle at $w_{1}$.
  The preceding vehicle travels through a set of way points $(w_{1}, w_{2}, \dots, w_{f})$, where each pair of consecutive way points are spaced $\Delta{s}$ apart, barring boundary condition.}
  \label{fig:problem_setup}
\end{figure}

For the considered vehicle to properly follow the preceding vehicle, we impose the following five requirements.
First and foremost, to prevent collision, we constrain the space headway between the two vehicles to be greater than some minimal headway $h_{\min}(t): \mathbb{R} \rightarrow \mathbb{R}_{\geq 0}$.
Next, to \textit{follow} the preceding vehicle, we constrain the space headway to be smaller than some maximum headway $h_{\max}(t): \mathbb{R} \rightarrow \mathbb{R}_{\geq 0}$.
The third constraint requires the speed of the considered vehicle to fall within some speed limits $[v_{\min}, v_{\max}]$.
Next, we constrain the acceleration of the considered vehicle to be bounded within some range $[a_{\min}, a_{\max}]$.
Last but not least, among all trajectories that satisfies the constraints above, we select one that is the smoothest, as measured by $\ell^{2}$-norm of the acceleration.

We summarize the above specifications in Problem~\ref{prob:car_following}.

\begin{problem}
Provided $s^{0}(t)$ for $t \in [t_{0}, t_{f}]$ with some finite $t_{f}$, determine a trajectory of least acceleration in $\ell^{2}$-norm for the considered vehicle $s^{1}(t)$ for $t \in [t_{0}, t_{f}]$ such that \textsl{1}) $\min_{\forall t} s^{0}(t) - s^{1}(t) \geq h_{\min}(t)$, \textsl{2}) $\max_{\forall t} s^{0}(t) - s^{1}(t) \leq h_{\max}(t)$, \textsl{3}) $v^{1}(t) \in [v_{\min}, v_{\max}]$ for all $t$, and \text{4}) $a^{1}(t) \in [a_{\min}, a_{\max}]$ for all $t$.
\label{prob:car_following}
\end{problem}

With the problem stated above, we present the proposed controller in the next section.

\section{Optimal Car-Following Controller}
\label{sec:optimal_car-following_controller}

In this section, we present a hierarchical MPC scheme consisting of \text{1}) a prediction layer, \textsl{2}) a planning layer, and \textsl{3}) a tracking layer.
The layered controller design is illustrated in Figure~\ref{fig:controller_design}.
We present the details of the three layers below.

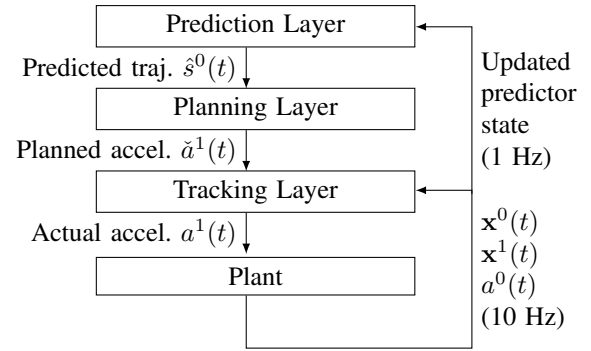
\begin{figure}[htbp]
  \centering
  \begin{tikzpicture}
    \node[draw, minimum width=4cm, minimum height=0.5cm, anchor=south west, text width=4cm, align=center] (predictor) at (0, 0) {Prediction Layer};
    \node[draw, minimum width=4cm, minimum height=0.5cm, anchor=south west, text width=4cm, align=center] (planner) at (0, -1.1) {Planning Layer};
    \node[draw, minimum width=4cm, minimum height=0.5cm, anchor=south west, text width=4cm, align=center] (controller) at (0, -2.2) {Tracking Layer};
    \node[draw, minimum width=4cm, minimum height=0.5cm, anchor=south west, text width=4cm, align=center] (plant) at (0, -3.3) {Plant};
    \draw[->, >=latex] (2, 0) to node[left] {Predicted traj. $\hat{s}^{0}(t)$} (2, -0.55);
    \draw[->, >=latex] (2, -1.1) to node[left] {Planned accel. $\check{a}^{1}(t)$} (2, -1.65);
    \draw[->, >=latex] (2, -2.2) to node[left] {Actual accel. $a^{1}(t)$} (2, -2.75);
    \draw[->, >=latex] (2, -3.3) to (2, -4) to (5, -4) to node[right, text width=0.5cm] {$\vec{x}^{0}(t)$ $\vec{x}^{1}(t)$ $a^{0}(t)$ (10~Hz)} (5, -1.9) to (4.25, -1.9);
    \draw[->, >=latex] (5, -1.95) to node[right, text width=1.5cm] {Updated predictor state (1~Hz)} (5, 0.275) to (4.25, 0.275);
  \end{tikzpicture}
  \caption{Diagram of the MPC controller design.
  Prediction layer and planning layer are invoked every 1 second.
  Tracking layer is triggered every 0.1 second.}
  \label{fig:controller_design}
\end{figure}

\subsection{Prediction layer}

The prediction layer predicts the trajectory of the preceding vehicle based on an ETA estimator.
Because the ETA estimator uses downstream information, it renders the controller non-local.

Define a way point every $\Delta{s}$ from $s^{0}(t)$ to $s^{0}(t) + l \cdot \Delta{s}$ at some time $t \in [t_{0}, t_{f}]$ and for some spatial receding horizon $l \cdot \Delta{s} > 0$.
Non-integer $l$ indicates that the spatial horizon cannot be evenly divided by $\Delta{s}$.
Collect all way points into a sequence, we have $\mathcal{W}(t) \coloneqq (w_{1}(t), w_{2}(t), \dots, w_{l+1}(t)) = ( s^{0}(t) + i \cdot \Delta{s} \mid i = 0, 1, \dots, l )$.
For convenience, we assume $w_{l+1} \leq w_{f}$.

Define $\widehat{\mathcal{T}}^{0}(t)$ as the corresponding ETA set for the preceding vehicle, that is, $\widehat{\mathcal{T}}^{0}(t) \coloneqq (t_{1}^{0}(t), \hat{t}_{2}^{0}(t), \dots, \hat{t}_{l+1}^{0}(t))$,
where $t_{i}^{0}, \hat{t}_{i}^{0}$ are the \textit{true} and \textit{estimated} arrival times of the preceding vehicle at the way point $w_{i} \in \mathcal{W}(t)$.
For the problem to be well-posed, we require both true and estimated arrival time sequences to be strictly increasing.


We assume there exists an ETA estimator\footnote{Common ETA service providers include Google Maps, Waze, INRIX, and Garmin.} $f: \mathcal{W}(t) \rightarrow \widehat{\mathcal{T}}^{0}(t)$ that estimates the \textit{arrival} time of the preceding vehicle at each way point in $\mathcal{W}(t)$.
To model prediction errors, we assume that the relative estimation error is \textit{uniformly} distributed with an average at one and a radius of $\sigma$, that is,
\begin{equation*}
\frac{\hat{t}_{i+1}^{0} - \hat{t}_{i}^{0}}{t_{i+1}^{0} - t_{i}^{0}} \sim \mathcal{U}(1-\sigma, 1+\sigma), \quad i = 1, 2, \dots, l,
\end{equation*}
where $\hat{t}_{1}^{0} = t_{1}^{0}$.
We use $f_{\sigma}$ to denote that an ETA estimator $f$ has an error radius of $\sigma$.

With $\mathcal{W}(t)$ and $\widehat{\mathcal{T}}^{0}(t) = f_{\sigma}(\mathcal{W}(t))$, we can then generate a \textit{predicted} trajectory for the preceding vehicle $\hat{s}^{0} : [t_{1}^{0}, \hat{t}_{l}^{0}] \rightarrow [s^{0}(t), s^{0}(t) + l \cdot \Delta{s}]$ using a standard interpolation method.
The above receding horizon design is illustrated in the $s$ and $\hat{t}^{0}$ axes of Figure~\ref{fig:receding_horizon_spec}.

\begin{figure}[t]
  \centering
  \resizebox{0.49\textwidth}{!}{%
  \begin{tikzpicture}
    \draw[->, >=latex, line width=0.35mm] (0, 0.6) to (7, 0.6) node[right] {$s$}; 
    \draw (0, 0.7) to (0, 0.6) node[below] {$w_{1}$};
    \draw (1.25, 0.7) to (1.25, 0.6) node[below] {$w_{2}$};
    \draw (2.5, 0.7) to (2.5, 0.6) node[below] {$w_{3}$};
    \draw (3.75, 0.7) to (3.75, 0.6) node[below] {$w_{4}$};
    \draw (5.125, 0.6) node[below] {$\dots$};
    \draw (6.5, 0.7) to (6.5, 0.6) node[below] {$w_{l+1}$};

    \draw[->, >=latex, line width=0.35mm] (0, 0) to (7, 0) node[right] {$\hat{t}^{0}$};
    \draw (0, 0.1) to (0, 0) node[below] {$t_{1}^{0}$};
    \draw (1.75, 0.1) to (1.75, 0) node[below] {$\hat{t}_{2}^{0}$};
    \draw (3.00, 0.1) to (3.00, 0) node[below] {$\hat{t}_{3}^{0}$};
    \draw (4.00, 0.1) to (4.00, 0) node[below] {$\hat{t}_{4}^{0}$};
    \draw (5.25, 0) node[below] {$\dots$};
    \draw (6.5, 0.1) to (6.5, 0) node[below] {$\hat{t}_{l+1}^{0}$};

    \draw[->, >=latex, line width=0.35mm] (0, -0.7) to (5.5, -0.7) node[right] {$t^{p}$}; 
    \draw (0, -0.6) to (0, -0.7) node[below] {$t^{p}_{0}$};
    \draw (0.75, -0.6) to (0.75, -0.7) node[below] {$t^{p}_{1}$};
    \draw (1.5, -0.6) to (1.5, -0.7) node[below] {$t^{p}_{2}$};
    \draw (2.25, -0.6) to (2.25, -0.7) node[below] {$t^{p}_{3}$};
    \draw (3, -0.6) to (3, -0.7) node[below] {$t^{p}_{4}$};
    \draw (4, -0.7) node[below] {$\dots$};
    \draw (5, -0.6) to (5, -0.7) node[below] {$t^{p}_{m}$};

    \draw[->, >=latex, line width=0.35mm] (0, -1.4) to (3.7, -1.4) node[right] {$t^{c}$}; 
    \draw (0, -1.3) to (0, -1.4) node[below] {$t^{c}_{0}$};
    \draw (0.2, -1.3) to (0.2, -1.4) node[below] {};
    \draw (0.4, -1.3) to (0.4, -1.4) node[below] {$t^{c}_{1}$};
    \draw (0.6, -1.3) to (0.6, -1.4) node[below] {};
    \draw (0.8, -1.3) to (0.8, -1.4) node[below] {$t^{c}_{3}$};
    \draw (1.0, -1.3) to (1.0, -1.4) node[below] {};
    \draw (1.2, -1.3) to (1.2, -1.4) node[below] {$t^{c}_{5}$};
    \draw (1.4, -1.3) to (1.4, -1.4) node[below] {};
    \draw (1.6, -1.3) to (1.6, -1.4) node[below] {$t^{c}_{7}$};
    \draw (1.8, -1.3) to (1.8, -1.4) node[below] {};
    \draw (2.0, -1.3) to (2.0, -1.4) node[below] {$t^{c}_{9}$};
    \draw (2.6, -1.4) node[below] {$\dots$};
    \draw (3.2, -1.3) to (3.2, -1.4) node[below] {$t^{c}_{n}$};
  \end{tikzpicture}}%
  \caption{Specification of the receding-horizon control scheme.
  The prediction layer has a \textit{fixed} spatial horizon and a \textit{variable} temporal horizon.
  The planning layer has a temporal horizon $m$.
  The tracking layer has a fixed temporal horizon $n$.
  }
  \label{fig:receding_horizon_spec}
\end{figure}
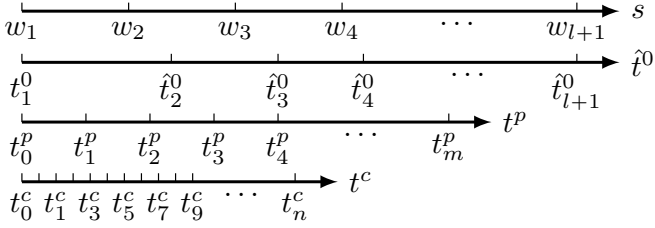

\subsection{Planning layer}

The planning layer plans a reference trajectory for the considered vehicle based on the predicted trajectory of the preceding vehicle.

With $\hat{s}^{0}(t)$ predicted by $f_{\sigma}$, we can derive the following four trajectories:
\textsl{1}) $\hat{s}_{t}^{-}(t) \coloneqq \hat{s}^{0}(t - \Delta{t}^{-})$: predicted envelope of some minimal time headway $\Delta{t}^{-}$,
\textsl{2}) $\hat{s}_{t}^{+}(t) \coloneqq \hat{s}^{0}(t - \Delta{t}^{+})$: predicted envelope of some maximal time headway $\Delta{t}^{+}$,
\textsl{3}) $\hat{s}_{s}^{-}(t) \coloneqq \hat{s}^{0}(t) - \Delta{s}^{-}$: predicted envelope of some minimal space headway $\Delta{s}^{-}$, and
\textsl{4}) $\hat{s}_{s}^{+}(t) \coloneqq \hat{s}^{0}(t) - \Delta{s}^{+}$: predicted envelope of some maximal space headway $\Delta{s}^{+}$.
The above four trajectories are illustrated in Figure~\ref{fig:envelope_design}.

While many possible designs are possible, in this study we select the minimal and maximal headway envelopes as follows
\begin{equation}\begin{aligned}
  \hat{s}_{\min}\left( t \mid \mathcal{W}, \widehat{\mathcal{T}}^{0} \right) &\coloneqq \max\left( \min\left( \hat{s}_{s}^{-}(t), \hat{s}_{t}^{-}(t) \right), \hat{s}_{s}^{+}(t) \right),\\
  \hat{s}_{\max}\left( t \mid \mathcal{W}, \widehat{\mathcal{T}}^{0} \right) &\coloneqq \min\left( \max\left( \hat{s}_{s}^{+}(t), \hat{s}_{t}^{+}(t) \right), \hat{s}_{s}^{-}(t) \right).
\end{aligned}
\label{eq:headway_constraints}
\end{equation}
The area between $\hat{s}_{\min}(t)$ and $\hat{s}_{\max}(t)$ is shaded in green in Figure~\ref{fig:envelope_design}.
Consequently, the minimum and maximum headways are
\begin{equation}\begin{aligned}
  h_{\min}(t) &= \hat{s}^{0}(t) - \hat{s}_{\min}\left( t \mid \mathcal{W}, \widehat{\mathcal{T}}^{0} \right),\\
  h_{\max}(t) &= \hat{s}^{0}(t) - \hat{s}_{\max}\left( t \mid \mathcal{W}, \widehat{\mathcal{T}}^{0} \right).\\
\end{aligned}\end{equation}
The objective of the design in~\eqref{eq:headway_constraints} is to generate smooth car-following without falling far behind nor getting too close.
Compared to~\cite{schmied2015nonlinear,jia2018energy}, our formulation is stricter, which leads to a tighter yet still safe and smooth car-following.

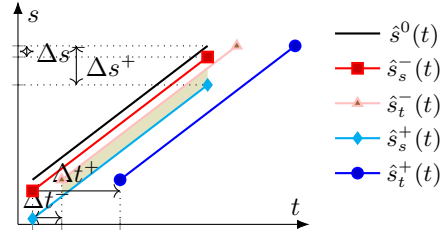
\begin{figure}[t]
  \centering
  \begin{tikzpicture}
    \begin{axis}[
        axis lines=middle,
        axis line style={-Latex},
        xmin=0,
        xmax=100,
        ymin=0,
        ymax=100,
        xlabel=$t$,
        ylabel=$s$,
        xtick={5,15,35},
        xticklabels={},
        ytick={62.5,75,80},
        yticklabels={},
        legend style={at={(1.5,0.125)},anchor=south east},
        legend style={draw=none,font=\small},
        width=0.3\textwidth,
        height=0.25\textwidth,
      ]
      \addplot+[no marks,domain=5:65,samples=100,black,thick] {x + 15};
      \addplot+[mark=square*,domain=5:65,samples=2,red,thick] {x + 10};
      \addplot+[mark=triangle*,domain=15:75,samples=2,pink,thick,name path=lower] {(x-10) + 15};
      \addplot+[mark=diamond*,domain=5:65,samples=2,cyan,thick,name path=upper] {x - 2.5};
      \addplot+[mark=*,domain=35:95,samples=2,blue,thick] {(x-30) + 15};
      \addplot[mark=none,black,dotted] coordinates {(5,0) (5,20)};
      \addplot[mark=none,black,dotted] coordinates {(15,0) (15,20)};
      \addplot[mark=none,black,dotted] coordinates {(35,0) (35,20)};
      \addplot[mark=none,black,dotted] coordinates {(0,80) (65,80)};
      \addplot[mark=none,black,dotted] coordinates {(0,75) (65,75)};
      \addplot[mark=none,black,dotted] coordinates {(0,62.5) (65,62.5)};
      \draw [<->] (5,3) -- node[above] {$\Delta{t}^{-}$} (15,3);
      \draw [<->] (5,15) -- node[above] {$\Delta{t}^{+}$} (35,15);
      \draw [<->] (3,75) -- node[right] {$\Delta{s}^{-}$} (3,80);
      \draw [<->] (20,62.5) -- node[right] {$\Delta{s}^{+}$} (20,80);
      \legend{$\hat{s}^{0}(t)$, $\hat{s}^{-}_{s}(t)$, $\hat{s}^{-}_{t}(t)$, $\hat{s}^{+}_{s}(t)$, $\hat{s}^{+}_{t}(t)$};
      \addplot[olive,opacity=0.25] fill between[of=lower and upper, soft clip={domain=15:65}];
    \end{axis}
  \end{tikzpicture}
  \caption{Headway envelope design.
  $\hat{s}^{0}(\cdot)$ is the predicted trajectory of the proceeding vehicle.
  $\hat{s}^{-}_{s}(\cdot), \hat{s}^{+}_{s}(\cdot)$ are estimated minimum and maximum \textit{space} headway envelopes, respectively.
  $\hat{s}^{-}_{t}(\cdot), \hat{s}^{+}_{t}(\cdot)$ are estimated minimum and maximum \textit{time} headway envelopes, respectively.
  The shaded region represents the feasible set in space and time.
  }
  \label{fig:envelope_design}
\end{figure}


Now, let $m$ be the temporal planning horizon and $\Delta{t}_{p}$ be the temporal planning resolution.
For all $i_{p} = 0, \dots, m$, denote $t_{i_{p}}^{p} \coloneqq t^{0}_{1} + i_{p}\Delta{t}_{p}$.
Non-integer $m$ indicates that $\hat{t}_{m}^{0} - t^{0}_{1}$ cannot be evenly divided by $\Delta{t}_{p}$.

Define the planned states $\widecheck{\mathcal{X}}^{1p}$ and the planned accelerations $\widecheck{\mathcal{U}}^{1p}$ of the considered vehicle as follows:
\begin{equation}
\begin{aligned}
  \widecheck{\mathcal{X}}_{m}^{1p} &\coloneqq \{\check{\vec{x}}^{1}(t_{0}^{p}), \check{\vec{x}}^{1}(t_{1}^{p}), \dots, \check{\vec{x}}^{1}(t_{m}^{p}) )\},\\
  \widecheck{\mathcal{U}}_{m-1}^{1p} &\coloneqq \{\check{a}^{1}(t_{0}^{p}), \check{a}^{1}(t_{1}^{p}), \dots, \check{a}^{1}(t_{m-1}^{p})\}.
\end{aligned}
\label{eq:decision_variables_planning}
\end{equation}
Here, we introduce the check accent $\check{x}$ to denote that a variable $x$ is a \textit{planned} decision variable.

With everything defined above, we plan for the states and accelerations of the considered vehicle via the following LCQP:

\begin{mini!}|s|[2]                   
    {X}                               
    {\resizebox{.75\hsize}{!}{${\displaystyle \alpha \sum_{i_{p}=0}^{m-1} \left(\check{a}^{1}(t_{i_{p}}^{p})\right)^{2} + \beta \sum_{j_{p}=1}^{m} \left( \xi_{j_{p}}^{p} \right)^{2} + \gamma \sum_{j_{p}=1}^{m} \left( \zeta_{j_{p}}^{p} \right)^{2}}$} \label{eq:planning-lcqp-obj}}   
    {\label{eq:planning-lcqp}}             
    {}                                
    \addConstraint{\check{\vec{x}}^{1}(t_{0}^{p})}{ = \vec{x}^{1}_{0} \label{eq:planning-lcqp-con1}}
    \addConstraint{\check{\vec{x}}^{1}(t_{i_{p}+1}^{p})}{=
      \mat{A}_{i_{p}}^{p} \cdot \check{\vec{x}}^{1}(t_{i_{p}}^{p}) +
      \mat{B}_{i_{p}}^{p} \cdot \check{a}^{1}(t_{i_{p}}^{p}) \label{eq:planning-lcqp-con2}}
    \addConstraint{0 \leq \xi_{j_{p}}^{p}}{, \; \check{s}^{1}(t_{j_{p}}) - \hat{s}_{\min}\left( t_{j_{p}} \mid \mathcal{W}, \widehat{\mathcal{T}}^{0} \right) \leq \xi_{j_{p}}^{p} \label{eq:planning-lcqp-con3}}
    \addConstraint{0 \leq \zeta_{j_{p}}^{p}}{, \; -\check{s}^{1}(t_{j_{p}}) + \hat{s}_{\max}\left( t_{j_{p}} \mid \mathcal{W}, \widehat{\mathcal{T}}^{0} \right) \leq \zeta_{j_{p}}^{p}  \label{eq:planning-lcqp-con4}}
    \addConstraint{v_{\min} \leq \check{v}^{1}(t_{j_{p}}^{p})}{\leq v_{\max} \label{eq:planning-lcqp-con5}}
    \addConstraint{a_{\min} \leq \check{a}^{1}(t_{i_{p}}^{p})}{\leq a_{\max}, \label{eq:planning-lcqp-con6}}
\end{mini!}
where
\begin{equation*}\begin{aligned}
X &\coloneqq \left\{ \widecheck{\mathcal{X}}_{m}^{1p},\widecheck{\mathcal{U}}_{m-1}^{1p}, \{ \xi_{j_{p}}^{p} \}_{1}^{m}, \{ \zeta_{j_{p}}^{p} \}_{1}^{m} \right\},\\
\mat{A}_{i_{p}}^{p} &\coloneqq e^{\mat{A}(t_{i_{p}+1}^{p} - t_{i_{p}}^{p})}, \quad \mat{A} = \left[[0 \; 0]^{\top} \: [1 \; 0]^{\top}\right],\\
\mat{B}_{i_{p}}^{p} &\coloneqq \int^{t_{i_{p}+1}^{p}}_{t_{i_{p}}^{p}} e^{\mat{A}(t_{i_{p}+1}^{p} - \tau)} d\tau \cdot \mat{B}, \quad \mat{B} = [0 \; 1]^{\top},
\end{aligned}\end{equation*}
for $i_{p} = 0, \dots, m-1$ and $j_{p} = 1, \dots, m$.
The receding horizon design is illustrated in the $t^{p}$ axis of Figure~\ref{fig:receding_horizon_spec}.
We have $\mathcal{W}, \widehat{\mathcal{T}}^{0}$, $\alpha, \beta, \gamma$, and $\vec{x}_{0}^{1}$ as the inputs to the optimization program, where $\alpha, \beta, \gamma \in \mathbb{R}_{>0}, \alpha + \beta + \gamma = 1$, and $\vec{x}_{0}^{1}$ is the initial state of the considered vehicle.

In~\eqref{eq:planning-lcqp-obj}, the first term penalizes large accelerations, while the second and third term, coupled with~\eqref{eq:planning-lcqp-con3} and~\eqref{eq:planning-lcqp-con4}, regulate the vehicle inside the admissible space headway ranges.
Constraints~\eqref{eq:planning-lcqp-con1} and~\eqref{eq:planning-lcqp-con2} impose kinematic constraints using zero-order hold.
Lastly, constraints~\eqref{eq:planning-lcqp-con5} and~\eqref{eq:planning-lcqp-con6} impose speed and acceleration limits.

\begin{proposition}
If the feasible set is nonempty, problem~\eqref{eq:planning-lcqp} has unique global optimum.
\label{prop:opt}
\end{proposition}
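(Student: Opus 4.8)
The plan is to read~\eqref{eq:planning-lcqp} as a convex quadratic program and to recover uniqueness by eliminating the state variables. First I would note that constraints~\eqref{eq:planning-lcqp-con1}--\eqref{eq:planning-lcqp-con6} are all linear (equalities and non-strict inequalities) in the stacked decision vector $X$, so the feasible set is a closed convex polyhedron, and that the objective~\eqref{eq:planning-lcqp-obj} is a positive semidefinite quadratic form in $X$; hence~\eqref{eq:planning-lcqp} is a convex program, and its set of optimizers (if nonempty) is convex. This alone does not give a singleton, because the Hessian of~\eqref{eq:planning-lcqp-obj} vanishes on the blocks corresponding to the planned states $\widecheck{\mathcal{X}}_{m}^{1p}$, so the objective is not strictly convex in $X$.

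To repair this I would exploit the equality constraints~\eqref{eq:planning-lcqp-con1}--\eqref{eq:planning-lcqp-con2}. Since $\check{\vec{x}}^{1}(t_{0}^{p}) = \vec{x}_{0}^{1}$ is fixed and each $\check{\vec{x}}^{1}(t_{i_{p}+1}^{p})$ is an affine image of $\check{\vec{x}}^{1}(t_{i_{p}}^{p})$ and $\check{a}^{1}(t_{i_{p}}^{p})$, unrolling the recursion writes every planned state --- in particular every planned position $\check{s}^{1}(t_{j_{p}})$ and speed $\check{v}^{1}(t_{j_{p}}^{p})$ --- as an explicit affine function of the acceleration vector $\mathbf{a} \coloneqq (\check{a}^{1}(t_{0}^{p}),\dots,\check{a}^{1}(t_{m-1}^{p}))$ alone. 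Substituting these expressions into~\eqref{eq:planning-lcqp-con3}--\eqref{eq:planning-lcqp-con6} and into the objective converts~\eqref{eq:planning-lcqp} into an equivalent program in the reduced variables $z \coloneqq (\mathbf{a}, \{\xi_{j_{p}}^{p}\}_{1}^{m}, \{\zeta_{j_{p}}^{p}\}_{1}^{m})$: minimize $g(z) \coloneqq \alpha\|\mathbf{a}\|^{2} + \beta\sum_{j_{p}=1}^{m}(\xi_{j_{p}}^{p})^{2} + \gamma\sum_{j_{p}=1}^{m}(\zeta_{j_{p}}^{p})^{2}$ over a polyhedron $\mathcal{P}$. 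Because this reduction is a bijection between feasible sets that preserves objective values, $\mathcal{P}$ is nonempty exactly when the original feasible set is.

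Then I would close with a textbook argument. Since $\alpha,\beta,\gamma>0$, the reduced objective $g$ is a strongly convex, hence coercive, quadratic; $\mathcal{P}$ is closed; coercivity together with closedness and nonemptiness guarantees that $g$ attains its minimum on $\mathcal{P}$, and strong convexity together with convexity of $\mathcal{P}$ forces the minimizer $z^{\star}$ to be unique --- if $z_{1}\neq z_{2}$ both attained the minimum, the midpoint would lie in $\mathcal{P}$ with strictly smaller objective value, a contradiction. Pulling $z^{\star}$ back through the affine state map then reconstructs the unique state trajectory consistent with~\eqref{eq:planning-lcqp-con1}--\eqref{eq:planning-lcqp-con2}, and hence the unique global optimum $X^{\star}$ of~\eqref{eq:planning-lcqp}.

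The only real content beyond bookkeeping --- and the step I expect to be the crux --- is exactly this: the objective is merely positive semidefinite in $X$, so strict convexity cannot be invoked directly, and the state-elimination step is what restores strong convexity on the genuinely free variables. A secondary point to treat carefully is the existence of a minimizer: the feasible set is in fact unbounded, because the slacks $\xi_{j_{p}}^{p}$ and $\zeta_{j_{p}}^{p}$ are only bounded from below, so I must lean on coercivity of $g$ rather than on compactness of the feasible set.
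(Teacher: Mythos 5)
Your proof is correct, and it follows the same basic route as the paper --- convex quadratic objective, polyhedral feasible set, strong convexity forcing uniqueness --- but it is noticeably more careful on the one point where care is actually needed. The paper's proof simply asserts that because $\alpha,\beta,\gamma>0$ the objective~\eqref{eq:planning-lcqp-obj} is strongly convex and then cites a textbook lemma; taken literally, that assertion is false over the full decision vector $X$, since the planned states $\widecheck{\mathcal{X}}_{m}^{1p}$ do not appear in the objective and the Hessian is therefore only positive semidefinite. Your state-elimination step --- unrolling~\eqref{eq:planning-lcqp-con1}--\eqref{eq:planning-lcqp-con2} to write every planned position and speed as an affine function of the accelerations, so that the reduced objective in $(\mathbf{a},\xi,\zeta)$ is genuinely strongly convex and the reduction is a feasibility-preserving bijection --- is exactly the repair needed to make the paper's one-line claim rigorous; equivalently, one can say the objective is strongly convex when restricted to the affine subspace cut out by the dynamics. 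You also handle existence explicitly via coercivity of the reduced objective on a nonempty closed polyhedron, which matters because the feasible set is unbounded in the slack directions; the paper delegates this to the cited theorem. In short: same approach, but your version closes a real gap in the paper's argument rather than merely re-deriving it.
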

\begin{proof}
Because $\alpha, \beta, \gamma > 0$, the objective function~\eqref{eq:planning-lcqp-obj} is strongly convex.
By construction, the feasible set is a polyhedron, which is also convex.
Therefore, a direct application of Lemma 8.2 and Theorem 8.6 in \cite{calafiore2014optimization} reveals that if the feasible set is nonempty, problem~\eqref{eq:planning-lcqp} has an unique global optimum.
\end{proof}

\begin{proposition}
For some initial time $t_{0}$, if the feasible set of problem~\eqref{eq:planning-lcqp} is nonempty, then its corresponding receding-horizon policy has a solution for all $t > t_{0}$, that is, it is persistently feasible.
\label{prop:feasibility}
\end{proposition}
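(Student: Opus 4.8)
The plan is to prove persistent feasibility by the usual receding-horizon construction: from a feasible point of the planning LCQP~\eqref{eq:planning-lcqp} at one planning step, I exhibit a feasible point of the instance solved at the next step by shifting the planned acceleration sequence forward by one slot and appending a single coasting step. Because the planning layer re-plans from the state predicted by its own model~\eqref{eq:planning-lcqp-con1}--\eqref{eq:planning-lcqp-con2} (which is what ``its corresponding receding-horizon policy'' refers to), the initial condition of the next instance is exactly $\check{\vec{x}}^{1}(t_{1}^{p})$ taken from the current optimal plan, which exists and is unique by Proposition~\ref{prop:opt}. Feasibility for all $t > t_{0}$ then follows by induction over the planning steps, the hypothesis supplying the base case.

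For the induction step, let $P_{k}$ be the instance of~\eqref{eq:planning-lcqp} at step $k$, with a feasible point $\{\check{\vec{x}}^{1}(t_{i_{p}}^{p})\}_{i_{p}=0}^{m}$, $\{\check{a}^{1}(t_{i_{p}}^{p})\}_{i_{p}=0}^{m-1}$, $\{\xi_{j_{p}}^{p}, \zeta_{j_{p}}^{p}\}_{j_{p}=1}^{m}$. I would build a candidate for $P_{k+1}$ by setting $\tilde{a}(t_{i_{p}}^{p}) \coloneqq \check{a}^{1}(t_{i_{p}+1}^{p})$ for $i_{p} = 0, \dots, m-2$ and $\tilde{\vec{x}}(t_{i_{p}}^{p}) \coloneqq \check{\vec{x}}^{1}(t_{i_{p}+1}^{p})$ for $i_{p} = 0, \dots, m-1$. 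Then~\eqref{eq:planning-lcqp-con1} holds by the re-anchoring above;~\eqref{eq:planning-lcqp-con2} holds on the shifted indices because every interior planning step has length $\Delta{t}_{p}$, so the discretization matrices $\mat{A}_{i_{p}}^{p}, \mat{B}_{i_{p}}^{p}$ are unchanged under the shift; and the box constraints~\eqref{eq:planning-lcqp-con5}--\eqref{eq:planning-lcqp-con6} likewise carry over index by index (in particular~\eqref{eq:planning-lcqp-con5} at $j_{p}=m$ in $P_{k}$ furnishes the speed bound needed at $j_{p}=m-1$ in $P_{k+1}$). The headway constraints~\eqref{eq:planning-lcqp-con3}--\eqref{eq:planning-lcqp-con4} are \emph{soft}: the only requirements on $\xi_{j_{p}}^{p}, \zeta_{j_{p}}^{p}$ are nonnegativity and being an upper bound on the (signed) envelope violation. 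Hence I can recompute $\tilde{\xi}_{j_{p}} \coloneqq \max\bigl(0,\; \tilde{s}(t_{j_{p}}^{p}) - \hat{s}_{\min}(t_{j_{p}}^{p} \mid \mathcal{W}, \widehat{\mathcal{T}}^{0})\bigr)$ and symmetrically $\tilde{\zeta}_{j_{p}}$ using the \emph{new} waypoint/ETA data; this is precisely where the drift of the envelopes between planning steps is absorbed without affecting feasibility.

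It remains to fill the last slot of the shifted horizon, and this is where I expect the only genuine subtlety. I would invoke the (mild, standing) assumption that coasting is admissible, $a_{\min} \leq 0 \leq a_{\max}$, and set $\tilde{a}(t_{m-1}^{p}) \coloneqq 0$, propagating $\tilde{\vec{x}}(t_{m}^{p})$ through~\eqref{eq:planning-lcqp-con2}; since $\dot{v} = a$ under zero-order hold, $\tilde{v}(t_{m}^{p}) = \tilde{v}(t_{m-1}^{p}) = \check{v}^{1}(t_{m}^{p}) \in [v_{\min}, v_{\max}]$, so~\eqref{eq:planning-lcqp-con5}--\eqref{eq:planning-lcqp-con6} hold at the new terminal index, and the terminal slacks are set as above. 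This yields a feasible point of $P_{k+1}$ and closes the induction; together with Proposition~\ref{prop:opt} it shows the receding-horizon policy is well-defined for every $t > t_{0}$. The obstacle worth spelling out is that $a_{\min}\le 0\le a_{\max}$ is not merely convenient but necessary: for an arbitrary $v \in [v_{\min}, v_{\max}]$, the accelerations keeping the speed in range over one step form a nonempty set exactly when $0 \in [a_{\min}, a_{\max}]$, and $\tilde{a}=0$ realizes it; a secondary, harmless bookkeeping point is the possibly shorter terminal planning step when $m$ is non-integer, which the shift relegates to the single appended step where $\tilde{a}=0$ makes the exact matrix values irrelevant.
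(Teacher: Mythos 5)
Your argument is correct, but it is a genuinely different (and heavier) route than the paper's. You run the textbook recursive-feasibility construction: shift the previous feasible plan by one slot, append a coasting step, and recompute the slack variables against the new envelope data. The paper instead observes that no shifting is needed at all: at \emph{every} re-planning instant the all-zero acceleration sequence is itself a feasible point, because the only hard constraints are the dynamics, the velocity box~\eqref{eq:planning-lcqp-con5}, and the acceleration box~\eqref{eq:planning-lcqp-con6} --- coasting keeps the velocity constant (hence in $[v_{\min},v_{\max}]$ if it starts there) and the headway constraints~\eqref{eq:planning-lcqp-con3}--\eqref{eq:planning-lcqp-con4} are soft, so the slacks absorb any violation. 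Both certificates ultimately rest on the same two facts (softness of the headway constraints and admissibility of coasting), but the paper's version is shorter and, importantly, does not require the next instance to be initialized at the model-predicted state $\check{\vec{x}}^{1}(t_{1}^{p})$: it works from whatever measured state the plant reports, so long as its velocity is admissible. Your shift argument, by contrast, only certifies the nominal receding-horizon loop (you say so explicitly), which is a stronger assumption than the architecture of Figure~\ref{fig:controller_design} warrants, since the tracking layer perturbs the executed acceleration. On the other hand, you make explicit two things the paper leaves implicit: the standing hypothesis $a_{\min}\leq 0\leq a_{\max}$ (without which neither certificate exists, consistent with Remark~\ref{rem:inf}), and the fact that the slacks must be recomputed against the \emph{updated} waypoint/ETA data --- both worthwhile clarifications even if the shift machinery itself is unnecessary here.
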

\begin{proof}
To show that problem~\eqref{eq:planning-lcqp} is persistently feasible, it suffices to show that it has a solution at every future time step.
Because the feasible set is initially nonempty, the problem has a solution at $t_{0}$.
For every other time step $t > t_{0}$, it is easy to verify that zero acceleration is a feasible point due to the fact that constraints~\eqref{eq:planning-lcqp-con3} and~\eqref{eq:planning-lcqp-con4} are soft.
Therefore, it follows that problem~\eqref{eq:planning-lcqp-obj}, when repeatedly applied in receding horizon, always stays feasible.
\end{proof}

\begin{remark}
When the initial velocity is too high or too low, the feasible set of~\eqref{eq:planning-lcqp} will become empty.
For example, when the initial velocity is greater than $v_{\max}+a_{\max}\Delta{t}^{p}$, there will be no acceleration within the actuation limits that can steer the velocity below $v_{\max}$ in the next time step.
To prevent any chance of infeasibility, we can replace the hard constraints on velocity with suitable soft constraints.
\label{rem:inf}
\end{remark}

\begin{remark}
Because the constraints on minimum headway are soft, it is possible for the considered vehicle to violate constraint~\eqref{eq:planning-lcqp-con3}.
Therefore, it is theoretically permissible for the preceding vehicle to collide with the preceding vehicle.
Note that this is chosen by design because the hard constraint version of~\eqref{eq:planning-lcqp-con3} can never be guaranteed in practice due to irregularity of human driving and inevitable errors in prediction.
Therefore, the presented soft constraints in~\eqref{eq:planning-lcqp-con3} should be viewed as a ``best-effort'' attempt to respect the minimum headway requirement.
\label{rem:col}
\end{remark}

Furthermore, we briefly comment on how constraints~\eqref{eq:planning-lcqp-con3} and~\eqref{eq:planning-lcqp-con4} can be extended to account for cut-ins and cut-outs.
For example, a possible design is illustrated in Figure~\ref{fig:envelope_design_cutinandout}.
Under this design, the considered vehicle will not overreact when a discontinuous change occurs in the position of the preceding vehicle.
Rather, it gradually recovers to a comfortable headway over time.
With the help a switching condition, this policy can be combined with that of~\eqref{eq:headway_constraints} to form a hybrid MPC controller to perform car-following with cut-ins and cut-outs.

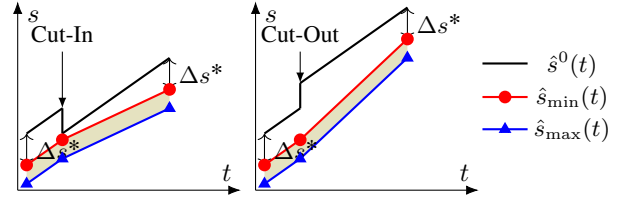
\begin{figure}[htbp]
\centering
  \begin{tikzpicture}
    \begin{axis}[
        axis lines=middle,
        axis line style={-Latex},
        xmin=0,
        xmax=125,
        ymin=0,
        ymax=150,
        xlabel=$t$,
        ylabel=$s$,
        ticks=none,
        xtick={},
        xticklabels={},
        ytick={},
        yticklabels={},
        legend style={at={(1,0.125)},anchor=south east},
        width=0.25\textwidth,
        height=0.225\textwidth,
      ]
      \addplot+[no marks,domain=5:25,samples=100,black,thick] {x + 40};
      \addplot+[no marks,domain=25:85,samples=100,black,thick] {x + 20};
      \addplot[mark=none,black,thick] coordinates {(25,65) (25,45)};
      \addplot[mark=*,red,thick,name path=lower] coordinates {(5,20) (25,40) (85,80)};
      \addplot[mark=triangle*,blue,thick,name path=upper] coordinates {(5,5) (25,25) (85,65)};
      \addplot[olive,opacity=0.25] fill between[of=lower and upper, soft clip={domain=5:85}];
      \draw [<->] (5,20) -- node[right] {\small $\Delta{s}^{*}$} (5,45);
      \draw [<->] (85,80) -- node[right] {\small $\Delta{s}^{*}$} (85,105);
      \draw [-latex] (25,110) node[above]{\small Cut-In} to (25,65);
    \end{axis}
  \end{tikzpicture}
  \begin{tikzpicture}
    \begin{axis}[
        axis lines=middle,
        axis line style={-Latex},
        xmin=0,
        xmax=125,
        ymin=0,
        ymax=150,
        xlabel=$t$,
        ylabel=$s$,
        ticks=none,
        xtick={},
        xticklabels={},
        ytick={},
        yticklabels={},
        legend style={at={(1.65,0.5)},anchor=east},
        legend style={draw=none,font=\small},
        width=0.25\textwidth,
        height=0.225\textwidth,
      ]
      \addplot+[no marks,domain=5:25,samples=100,black,thick] {x + 40};
      \addplot[mark=*,red,thick,name path=lower] coordinates {(5,20) (25,40) (85,120)};
      \addplot[mark=triangle*,blue,thick,name path=upper] coordinates {(5,5) (25,25) (85,105)};
      \addplot[olive,opacity=0.25] fill between[of=lower and upper, soft clip={domain=5:85}];
      \addplot+[no marks,domain=25:85,samples=100,black,thick] {x + 60};
      \addplot[mark=none,black,thick] coordinates {(25,65) (25,85)};
      \draw [<->] (5,20) -- node[right] {\small $\Delta{s}^{*}$} (5,45);
      \draw [<->] (85,120) -- node[right] {\small $\Delta{s}^{*}$} (85,145);
      \draw [-latex] (25,110) node[above]{\small Cut-Out} to (25,85);
      \legend{$\hat{s}^{0}(t)$, $\hat{s}_{\min}(t)$, $\hat{s}_{\max}(t)$};
    \end{axis}
  \end{tikzpicture}
  \caption{Headway envelope design to handle cut-ins and cut-outs.
  Left design is for cut-ins and right for cut-outs.
  The proposed designs aim to to keep a comfortable headway $\Delta{s}^{*}$ without overreaction.
  }
  \label{fig:envelope_design_cutinandout}
\end{figure}


\subsection{Tracking layer}
To handle prediction errors, an additional tracking layer is introduced to track the planned accelerations and to guard the considered vehicle from imminent collisions.

Concretely, the layer takes the following form, which is a modified version of~\eqref{eq:planning-lcqp}.
\begin{mini!}|s|[2]                   
    {Z}                               
    {\lambda \sum_{i_{c}=0}^{n-1} \left(\check{a}^{1}(t_{i_{c}}^{c}) - \bar{a}^{1}(t_{i_{c}}^{c})\right)^{2} + \mu \sum_{j_{c}=1}^{n} \left( \xi_{j_{c}}^{c} \right)^{2} \label{eq:tracking-lcqp-obj}}   
    {\label{eq:tracking-lcqp}}             
    {}                                
    \addConstraint{\bar{\vec{x}}^{1}(t_{0}^{c})}{ = \vec{x}^{1}_{0} \label{eq:tracking-lcqp-con1}}
    \addConstraint{\bar{\vec{x}}^{1}(t_{i_{c}+1}^{c})}{=
      \mat{A}_{i_{c}}^{c} \cdot \bar{\vec{x}}^{1}(t_{i_{c}}^{c}) +
      \mat{B}_{i_{c}}^{c} \cdot \bar{a}^{1}(t_{i_{c}}^{c}) \label{eq:tracking-lcqp-con2}}
    \addConstraint{0 \leq \xi_{j_{c}}^{c}}{, \quad s^{1}(t_{j_{c}}^{c}) - \tilde{s}_{\min}\left( t_{j_{c}}^{c} \mid a^{0}(t_{0}^{c}) \right) \leq \xi_{j_{c}}^{c} \label{eq:tracking-lcqp-con3}}
    \addConstraint{v_{\min} \leq \bar{v}^{1}(t_{j_{c}}^{c})}{\leq v_{\max} \label{eq:tracking-lcqp-con4}}
    \addConstraint{a_{\min} \leq \bar{a}^{1}(t_{i_{c}}^{c})}{\leq a_{\max}, \label{eq:tracking-lcqp-con5}}
\end{mini!}
where
\begin{equation*}\begin{aligned}
t_{i_{c}}^{c} &\coloneqq t + i_{c}\Delta{t}^{c},\quad Z \coloneqq \left\{ \overline{\mathcal{X}}_{n}^{1c},\overline{\mathcal{U}}_{n-1}^{1c}, \{ \xi_{j_{c}}^{c} \}_{1}^{n} \right\},\\
\mat{A}_{i_{c}}^{c} &\coloneqq e^{\mat{A}(t_{i_{c}+1}^{c} - t_{i_{c}}^{c})}, \quad \mat{B}_{i_{c}}^{c} \coloneqq \int^{t_{i_{c}+1}^{c}}_{t_{i_{c}}^{c}} e^{\mat{A}(t_{i_{c}+1}^{c} - \tau)} d\tau \cdot \mat{B},
\end{aligned}\end{equation*}
for some horizon $n$, resolution $\Delta{t}^{c}$, $i_{c} = 0, \dots, n-1$, and $j_{c} = 1, \dots, n$.
$\overline{\mathcal{X}}_{n}^{1c},\overline{\mathcal{U}}_{n-1}^{1c}$ are defined similar to those in~\eqref{eq:decision_variables_planning}.
Note that here we use accent $\bar{x}$ to indicate that a variable $x$ is an internal variable of the tracking controller.
The receding horizon design is illustrated in the $t^{c}$ axis of Figure~\ref{fig:receding_horizon_spec}.
Unlike the planning layer, the minimum headway envelope $\tilde{s}_{\min}(\cdot \mid a^{0}(\cdot))$ is generated by assuming that the preceding vehicle accelerates \textit{constantly} from $t_{0}^{c}$ to $t_{n}^{c}$.
Similar to the planning layer once again, we have $a^{0}(t_{0})$, $\lambda, \mu$, and $\vec{x}_{0}^{1}$ as inputs to the optimization program, where $\lambda, \mu \in \mathbb{R}_{>0}, \lambda + \mu = 1$.

The modified LCQP in the tracking layer in~\eqref{eq:tracking-lcqp} is structurally identical to~\eqref{eq:planning-lcqp} except for:
\textsl{1}) we remove the soft constraint on maximum headway;
\textsl{2}) we adopt a shorter planning horizon $n$ and a smaller temporal resolution $\Delta{t}^{c}$; and
\textsl{3}) we predict the trajectory of the preceding vehicle by assuming that it keeps its current acceleration for the entire tracking horizon.

Note that the last change above is made to ensure safety.
Because constraints~\eqref{eq:tracking-lcqp-con3} only takes vehicular RADAR measurements, which are generally highly reliable, the collision avoidance of the car-following is mostly \textit{decoupled} from the ETA estimation.
Hence, a bad prediction from the ETA estimator cannot severely impact the safety of the vehicle.

\begin{remark}
Because the optimization problem~\eqref{eq:tracking-lcqp} may be viewed as a special case of problem~\eqref{eq:planning-lcqp} where $\gamma = 0$, it has the same properties as those highlighted in Propositions~\ref{prop:opt},\ref{prop:feasibility} and Remarks~\ref{rem:inf},\ref{rem:col}.
\end{remark}

\begin{remark}
We assume that we can directly control the acceleration of the vehicle.
Consequently, dynamic constraints~\eqref{eq:tracking-lcqp-con2} are purely \textit{kinematic}.
\end{remark}

\added[id=FW]{
Lastly, we briefly describe how our proposed method can be extended for multi-vehicle platooning.
For example, consider a platoon of vehicles with indices $0, 1, \dots, k$, where index $0$ indicates the preceding human vehicle and indices $1, \dots, k$ indicate the $k$ following automated vehicles.
As before, we can use an ETA estimator to generate a prediction for vehicle zero.
For all the vehicles that \textit{follow}, we can substitute the predicted trajectories of the preceding vehicles by the corresponding planned trajectories.
That is, for $i = 1, 2, \dots, k$, we can use $\check{a}^{i-1}(t)$ to compute $\check{s}^{i-1}(t)$ and approximate $\hat{s}^{i-1}(t)$ with $\check{s}^{i-1}(t)$.
Integrating the above prediction method with the planning and tracking methods presented in this paper, we have consequently produced a complete platoon controller design.
}

\section{Numerical Simulation}
\label{sec:numerical_simulation}

To evaluate the proposed controller, we test its performance through an array of numerical experiments, each equipped with an ETA estimator of an unique combination of spatial resolution $\Delta{s}$ and noise level $\sigma$.

For all simulations, we emulate the preceding vehicle by replaying a recorded drive from~\cite{matthew_nice_2021_6366762}.
The velocity and acceleration of the recorded drive are shown in the second and third rows of Figure~\ref{fig:planning}.

Spatial horizon of the ETA estimators are fixed to $l \cdot \Delta{s} = \SI{3000}{\meter}$.
We run an array of simulations by varying the spatial resolution $\Delta{s}$ between $\SI{10}{\meter}$ and $\SI{500}{\meter}$ and noise level $\sigma$ between $0.01$ and $0.25$.
To establish a performance baseline, we run an additional simulation with an intelligent driver model (IDM) controller~\cite{treiber2000congested}.
Parameters of the IDM are chosen to roughly match that of the MPC controller: $a = 1.5, b = 3, \delta = 4, s_{0} = 3.5, \ell = 4.65$.
Likewise, to establish a performance upper bound, we run one more simulation with a controller that can perfectly foresee the trajectory of the preceding vehicle.
We call this controller the oracle controller.

Shared parameters of the planning and tracking layers are: $v_{\min} = 0, v_{\max} = \SI{35}{\meter\per\second}, a_{\min} = \SI{-1.5}{\meter\per\second\squared}, a_{\max} = \SI{3}{\meter\per\second\squared}, \Delta{s}^{-} = \SI{5}{\meter}, \Delta{s}^{+} = \SI{100}{\meter}, \Delta{t}^{-} = \SI{0.6}{\second}$, and $\Delta{t}^{+} = \SI{3.0}{\second}$.
Unique parameters of the planning layer are: $\alpha = 0.2, \beta = 0.7, \gamma = 0.1, \Delta{t}^{p} = \SI{1}{\second}$, and $m = 60$.
Unique parameters of the tracking layer are: $\lambda = 0.1, \mu = 0.9, \Delta{t}^{c} = \SI{0.1}{\second}$, and $n = 30$.

\added[id=FW]{
To evaluate performance, we use two measures $e$ and $f$ to quantify tracking error and fuel saving, respectively.
The measure $e$ is defined to be the standard deviation of $\check{v}^{1*} - \bar{v}^{1}$ for all sampled time instants, where $\check{v}^{1*}$ denotes the planned velocity under \textit{perfect} prediction.
The measure $f$ is defined to be the ratio of the fuel consumption of the baseline IDM controller to that of the MPC controller, using the energy model described in~\cite{10.1145/3459609}.
}


\begin{figure}[htbp]
  \centering
  \hspace{-1.7em}
  \begin{tikzpicture}
      \begin{axis}[
          trim axis right,
          name=position_plot,
          axis line style={-Latex},
          xlabel={},
          xticklabels={},
          xmax=700,
          ylabel=$s - \hat{s}^{0}$ (m),
          width=0.475\textwidth,
          height=0.15\textwidth,
          legend style={draw=none,legend columns=-1,at={(0.5,1.6)},anchor=north},
          enlarge x limits=false,
          mark repeat=10,
        ]
        \addplot[no marks,thick] table[x=t, y expr=\thisrowno{1}-\thisrowno{1},col sep=comma] {data/7050_solution.csv};
        \addlegendentry{$\hat{s}^{0}$};
        \addplot[mark=*,olive,thick] table[x=t, y expr=\thisrowno{4}-\thisrowno{1},col sep=comma] {data/7050_solution.csv};
        \addlegendentry{$\check{s}^{1}$};
        \addplot[mark=square*,red,name path=lower] table[x=t, y expr=\thisrowno{7}-\thisrowno{1},col sep=comma] {data/7050_solution.csv};
        \addlegendentry{$\hat{s}_{\min}$};
        \addplot[mark=triangle*,blue,name path=upper] table[x=t, y expr=\thisrowno{8}-\thisrowno{1},col sep=comma] {data/7050_solution.csv};
        \addlegendentry{$\hat{s}_{\max}$};
        \addplot[olive,opacity=0.25] fill between[of=upper and lower];
      \end{axis}
  \end{tikzpicture}

  \vspace{-0.75em}
  \hspace{-1.45em}
  \begin{tikzpicture}
      \begin{axis}[
          name=position_plot,
          axis line style={-Latex},
          xlabel={},
          xticklabels={},
          xmax=700,
          ylabel=$v$ (m/s),
          width=0.475\textwidth,
          height=0.15\textwidth,
          legend pos=outer north east,
          legend style={draw=none,legend columns=-1,at={(0.5,1.6)},anchor=north},
          enlarge x limits=false,
          mark repeat=10,
        ]
        \addplot[no marks,thick] table[x=t, y expr=\thisrowno{2},col sep=comma] {data/7050_solution.csv};
        \addlegendentry{$\hat{v}^{0}$};
        \addplot[mark=*,olive,thick] table[x=t, y expr=\thisrowno{5},col sep=comma] {data/7050_solution.csv};
        \addlegendentry{$\check{v}^{1}$};
        \addplot[mark=square*,red,name path=lower,samples=100] table[x=t, y=v_min,col sep=comma] {data/7050_solution.csv};
        \addlegendentry{$v_{\min}$};
        \addplot[mark=triangle*,blue,name path=upper,samples=100] table[x=t, y=v_max,col sep=comma] {data/7050_solution.csv};
        \addlegendentry{$v_{\max}$};
        \addplot[olive,opacity=0.25] fill between[of=upper and lower];
      \end{axis}
  \end{tikzpicture}

  \vspace{-0.5em}
  \hspace{-0.95em}
  \begin{tikzpicture}
      \begin{axis}[
          name=position_plot,
          axis line style={-Latex},
          xlabel=$t$ (s),
          xmax=700,
          ylabel=$a$ (m/s$^2$),
          width=0.475\textwidth,
          height=0.15\textwidth,
          legend style={draw=none,legend columns=-1,at={(0.5,1.6)},anchor=north},
          enlarge x limits=false,
          mark repeat=10,
        ]
        \addplot[no marks,thick] table[x=t, y expr=\thisrowno{3},col sep=comma] {data/7050_solution.csv};
        \addlegendentry{$\hat{a}^{0}$};
        \addplot[mark=*,olive,thick] table[x=t, y expr=\thisrowno{6},col sep=comma] {data/7050_solution.csv};
        \addlegendentry{$\check{a}^{1}$};
        \addplot[mark=square*,red,name path=lower,samples=100] table[x=t, y=a_min,col sep=comma] {data/7050_solution.csv};
        \addlegendentry{$a_{\min}$};
        \addplot[mark=triangle*,blue,name path=upper,samples=100] table[x=t, y=a_max,col sep=comma] {data/7050_solution.csv};
        \addlegendentry{$a_{\max}$};
        \addplot[olive,opacity=0.25] fill between[of=upper and lower];
      \end{axis}
  \end{tikzpicture}
  \caption{Demonstration of the planning layer.
           The oracle controller assumes perfect prediction over the entire planning horizon.}
  \label{fig:planning}
\end{figure}

\begin{figure}[htbp]
  \centering
  \begin{tikzpicture}
    \begin{axis}[
      colormap/Reds,
      colorbar,
      colorbar style={
        ylabel=$e$ (m/s),
        yticklabel style={
            align=left,
        },
        width=0.1*\pgfkeysvalueof{/pgfplots/parent axis width}
      },
      view={0}{90},
      width=0.235\textwidth,
      height=0.23\textwidth,
      xtick = {0.01,0.05,0.1,0.15,0.2,0.25},
      xticklabels = {0.01,0.05,0.1,0.15,0.2,0.25},
      x tick label style={rotate=60,anchor=east},
      xlabel=$\sigma$ (s),
      ytick = {10,100,200,300,400,500},
      ylabel=$\Delta{s}$,
      ]
      \addplot3[surf, shader=flat] table[x index=0,y index=1,z index=2, col sep=comma] {data/error_stdev.txt};
    \end{axis}
  \end{tikzpicture}
  \begin{tikzpicture}
    \begin{axis}[
      colormap/Reds,
      colorbar,
      colorbar style={
        ylabel=$f$ (\%),
        yticklabel style={
            align=left,
        },
        width=0.1*\pgfkeysvalueof{/pgfplots/parent axis width}
      },
      view={0}{90},
      width=0.235\textwidth,
      height=0.23\textwidth,
      xlabel=$\sigma$ (s),
      ylabel=$\Delta{s}$ (m),
      xtick = {0.01,0.05,0.1,0.15,0.2,0.25},
      xticklabels = {0.01,0.05,0.1,0.15,0.2,0.25},
      x tick label style={rotate=60,anchor=east},
      xlabel=$\sigma$ (s),
      ytick = {10,100,200,300,400,500},
      ylabel=$\Delta{s}$,
      ]
      \addplot3[surf, shader=flat] table[x index=0,y index=1,z index=2, col sep=comma] {data/fuel_reduction.txt};
    \end{axis}
  \end{tikzpicture}
  \caption{\added[id=FW]{Performance heat maps of the MPC controllers.
  The top heat map measures tracking error $e$, while the bottom heat map measures fuel saving $f$.}}
  \label{fig:performance_heatmaps}
\end{figure}
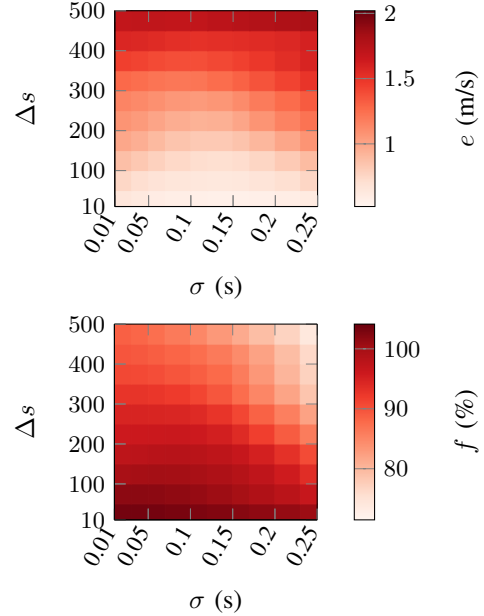

\added[id=FW]{
Position, velocity, and acceleration of the oracle controller are shown in Figure~\ref{fig:planning}.
As the oracle vehicle has demonstrated in Figure~\ref{fig:planning}, the optimal behavior is to avoid accelerating or decelerating unless there is an imminent violation of at least one of the two headway constraints.

Tracking error and fuel saving are visualized in the top and bottom heat maps in Figure~\ref{fig:performance_heatmaps}.
As expected, tracking error $e$ is proportional to spatial resolution $\Delta{s}$ and noise level $\sigma$.
Likewise, fuel saving $f$ is \textit{inversely} proportional to spatial resolution $\Delta{s}$ and noise level $\sigma$.
Note that both performance measures platoon near the origin, indicating that the MPC controller is robust to prediction errors and especially so to that induced by $\sigma$.
}

Lastly, we comment on the running times of the planning and tracking layers.
The optimization solver is part of a python package called CVXOPT~\cite{andersen2020cvxopt}.
Benchmarked on an Intel i7-6700K CPU, the average running time of the planning layer is about 0.12 s and of the tracking layer about 0.04 s.
From Figure~\ref{fig:controller_design}, we find that both times fit comfortably into their allocated time budgets.

\section{Discussions}
\label{sec:discussions}

In this section, we discuss two of the most consequential features in the design of the controller, namely, \textsl{1}) the space between the maximum and minimum headway constraints and \textsl{2}) the resolution and noise level of the ETA estimators.

It is clear that the larger the space between the maximum and minimum headway constraints, the smoother the car-following could be.
For example, if the considered car is allowed to be arbitrarily far behind the preceding vehicle without incurring any penalties, it can simply wait for the preceding vehicle to exit the road and then accelerate to a very small constant velocity to complete the trip.
Nevertheless, we know from common sense that such behavior is not acceptable in most of the real-world applications.

On the contrary, when the space between the two headway constraints is small, the MPC controller will become sensitive to prediction errors.
Consider a situation where the allowable headway gap is thin and the prediction errors are significant.
Because the considered vehicle is already close to the boundaries of the headway constraints, an error in prediction could easily mislead the planning layer to falsely believe that it will soon violate one of the headway constraints.
In an effort to steer away from imminent constraint violation, the planning layer overreacts, leading to undesirably large acceleration or deceleration.

To resolve the above problem, one could modify the headway constraints proposed in~\eqref{eq:headway_constraints}.
One possible solution is to enforce a minimum space gap between the maximum and minimum headway envelopes and to add a small penalty to encourage the considered vehicle to drive at the center of the allowable headway constraints.

\added[id=FW]{
Last but not least, we comment on how to choose an ETA estimator.
As illustrated in Figure~\ref{fig:performance_heatmaps}, the performance of the controller changes significantly with the resolution and noise level of the ETA estimator.
In practice, one may first define a desirable tracking error $e_{\textnormal{des}}$ and a desirable fuel saving $f_{\textnormal{des}}$.
With $e_{\textnormal{des}}$ and $f_{\textnormal{des}}$ defined, one can then draw two level sets in the two heat maps of Figure~\ref{fig:performance_heatmaps}.
Any ETA estimator with $e$ and $f$ that simultaneously fall within the two level sets may be deemed sufficient to meet the design specifications.
}

\section{Conclusions}
\label{sec:conclusions}

In this article, we propose a hierarchical MPC control scheme based on a LCQP.
We show via simulations that the controller can achieve smooth and tight car-following and is robust to prediction errors.
Additional constructions can be added to further enhance its robustness to erroneous forecasts.
Possible future works include field tests with various ETA estimators, extension to handle cut-ins and cut-outs, and modification for platooning.



\newpage
\bibliographystyle{unsrt}
\bibliography{references}

\begin{thebibliography}{10}

\bibitem{sugiyama2008traffic}
Yuki Sugiyama, Minoru Fukui, et~al.
\newblock Traffic jams without bottlenecks—experimental evidence for the
  physical mechanism of the formation of a jam.
\newblock {\em New Journal of Physics}, 10(3):033001, 2008.

\bibitem{tadaki2013phase}
Shin-ichi Tadaki, Macoto Kikuchi, et~al.
\newblock Phase transition in traffic jam experiment on a circuit.
\newblock {\em New Journal of Physics}, 15(10):103034, 2013.

\bibitem{martinez2007safe}
John-Jairo Martinez and Carlos Canudas-de Wit.
\newblock A safe longitudinal control for adaptive cruise control and
  stop-and-go scenarios.
\newblock {\em IEEE Transactions on Control Systems Technology},
  15(2):246--258, 2007.

\bibitem{kesting2008adaptive}
Arne Kesting, Martin Treiber, Martin Sch{\"o}nhof, and Dirk Helbing.
\newblock Adaptive cruise control design for active congestion avoidance.
\newblock {\em Transportation Research Part C: Emerging Technologies},
  16(6):668--683, 2008.

\bibitem{luo2010model}
Li-hua Luo, Hong Liu, Ping Li, and Hui Wang.
\newblock Model predictive control for adaptive cruise control with
  multi-objectives: comfort, fuel-economy, safety and car-following.
\newblock {\em Journal of Zhejiang University SCIENCE A}, 11(3):191--201, 2010.

\bibitem{kerner2018physics}
Boris~S Kerner.
\newblock Physics of automated driving in framework of three-phase traffic
  theory.
\newblock {\em Physical Review E}, 97(4):042303, 2018.

\bibitem{stern2018dissipation}
Raphael~E Stern, Shumo Cui, Maria~Laura Delle~Monache, et~al.
\newblock Dissipation of stop-and-go waves via control of autonomous vehicles:
  Field experiments.
\newblock {\em Transportation Research Part C: Emerging Technologies},
  89:205--221, 2018.

\bibitem{bu2010design}
Fanping Bu, Han-Shue Tan, and Jihua Huang.
\newblock Design and field testing of a cooperative adaptive cruise control
  system.
\newblock In {\em Proceedings of the 2010 American Control Conference}, pages
  4616--4621. IEEE, 2010.

\bibitem{schmied2015nonlinear}
Roman Schmied, Harald Waschl, Rien Quirynen, Moritz Diehl, and Luigi del Re.
\newblock Nonlinear mpc for emission efficient cooperative adaptive cruise
  control.
\newblock {\em IFAC-PapersOnLine}, 48(23):160--165, 2015.

\bibitem{jia2018energy}
Yanzhao Jia, Tomohiro Saito, Yutaro Itoh, Yernar Nukezhanov, and Daniel
  G{\"o}rges.
\newblock Energy-optimal adaptive cruise control in time domain based on model
  predictive control.
\newblock {\em IFAC-PapersOnLine}, 51(31):846--853, 2018.

\bibitem{hyeon2019influence}
Eunjeong Hyeon, Youngki Kim, Niket Prakash, and Anna~G Stefanopoulou.
\newblock Influence of speed forecasting on the performance of ecological
  adaptive cruise control.
\newblock In {\em Dynamic Systems and Control Conference}, volume 59148, page
  V001T08A003. American Society of Mechanical Engineers, 2019.

\bibitem{malikopoulos2018decentralized}
Andreas~A Malikopoulos, Christos~G Cassandras, and Yue~J Zhang.
\newblock A decentralized energy-optimal control framework for connected
  automated vehicles at signal-free intersections.
\newblock {\em Automatica}, 93:244--256, 2018.

\bibitem{shivam2020intersection}
Shashwat Shivam, Yorai Wardi, Magnus Egerstedt, Aris Kanellopoulos, and
  Kyriakos~G Vamvoudakis.
\newblock Intersection-traffic control of autonomous vehicles using
  newton-raphson flows and barrier functions.
\newblock {\em IFAC-PapersOnLine}, 53(2):15733--15738, 2020.

\bibitem{hyeon2019short}
Eunjeong Hyeon, Youngki Kim, Niket Prakash, and Anna~G Stefanopoulou.
\newblock Short-term speed forecasting using vehicle wireless communications.
\newblock In {\em 2019 American Control Conference (ACC)}, pages 736--741.
  IEEE, 2019.

\bibitem{calafiore2014optimization}
Giuseppe~C Calafiore and Laurent El~Ghaoui.
\newblock {\em Optimization Models}.
\newblock Cambridge University Press, 2014.

\bibitem{matthew_nice_2021_6366762}
Matthew Nice, Nathan Lichtle, et~al.
\newblock The {I}-24 {T}rajectory {D}ataset, September 2021.

\bibitem{treiber2000congested}
Martin Treiber, Ansgar Hennecke, and Dirk Helbing.
\newblock Congested traffic states in empirical observations and microscopic
  simulations.
\newblock {\em Physical Review E}, 62(2):1805, 2000.

\bibitem{10.1145/3459609}
Jonathan Lee, George Gunter, Rabie Ramadan, et~al.
\newblock Integrated framework of vehicle dynamics, instabilities, energy
  models, and sparse flow smoothing controllers.
\newblock In {\em DI-CPS'21: Proceedings of the Workshop on Data-Driven and
  Intelligent Cyber-Physical Systems}, New York, NY, USA, 2021. Association for
  Computing Machinery.

\bibitem{andersen2020cvxopt}
Martin Andersen, Joachim Dahl, and Lieven Vandenberghe.
\newblock {C}{V}{X}{O}{P}{T}: Convex optimization.
\newblock {\em Astrophysics Source Code Library}, pages ascl--2008, 2020.

\end{thebibliography}

\end{document}